\newtheorem{theorem}{Theorem}
\newtheorem{proposition}{Proposition}
\newtheorem{definition}{Definition}
\newtheorem{assumption}{Assumption}
\begin{document}
\author{Arnob Ghosh, Vaneet Aggarwal, and Hong Wan\\Purdue University, West Lafayette IN}

\title{Exchange of Renewable Energy among Prosumers using Blockchain with Dynamic Pricing}
\date{}
\maketitle
\begin{abstract}
We consider users which may have renewable energy harvesting devices, or distributed generators. Such users can behave as consumer or producer (hence, we denote them as  prosumers) at different time instances. A prosumer may sell the energy to other prosumers in exchange of money. We consider a demand response model, where the price of conventional energy depends on the total demand of all the prosumers at a certain time.  A prosumer depending on its own utility has to select the amount of energy it wants to buy either from the grid or from other prosumers, or the amount of excess energy it wants to sell to other prosumers. However, the strategy, and the payoff of a prosumer inherently depends on the strategy of other prosumers as a prosumer can only buy if the other prosumers are willing to sell. We formulate the problem as a coupled constrained game, and seek to obtain the generalized Nash equilibrium. We show that the game is a concave potential game and show that there exists a unique generalized Nash equilibrium. We consider that a platform will set the price for distributed interchange of energy among the prosumers in order to minimize the consumption of the conventional energy. We propose a distributed algorithm where the platform sets a price to each prosumer, and then each prosumer at a certain time only optimizes its own payoff. The prosumer then updates the price depending on the supply and demand for each prosumer. We show that the algorithm converges to an optimal generalized Nash equilibrium. The distributed algorithm also provides an optimal price for the exchange market.
\end{abstract}
\section{Introduction}

The problems of the peak demand, and the uncertainty of the demand and supply are challenging the traditional power grid. The integration of the renewable energy into the system has increased the uncertainty because of the randomness of renewable energy generation. The ever increasing use of the electric vehicles have also increased the demand of the users. Demand response mechanism has been proposed where time-of-use price is primarily used to shift the peak load during the off-peak times. However, those approaches assume that the users are {\em price takers} and do not consider the strategy of the other users. A consumer may consume a higher load, however the marginal price for each increment is the same. Those prices are set with an estimate of a users' demands. However, in the real time, the user's demand may be quite different from the estimate, and thus, the cost or the peak load may still be large. 

The continuing proliferation of the distributed energy resources (e.g., PV arrays, solar rooftops, energy storage units) have transformed the notion of traditional users of energy. The consumers can now also produce, and reduce the burden of the grid. Other users can buy the energy directly from those intermittent sellers. We denote the consumers with the capability of producing energies as prosumers. The prosumers can exchange necessary energies among themselves. The proliferation of the secure distributed database such as blockchain can facilitate such an exchange. Such local exchange is also useful as the transmission loss will be lower as compared to the scenario where the grid has to serve the users. However, a proper incentive is required for prosumers to participate for exchanging energy with others. 

Thus, a proper pricing mechanism is required which will facilitate the exchange of energies among the prosumers, and will decrease demand of the conventional energy from the grid. Further, the pricing mechanism has to be computationally simple, because the exchange has to be taken place in real time.  The determination of a proper exchange price is challenging. This is because of the uncertain nature of the renewable energy. Further, the utilities of each prosumer are also unknown. Finally, the exchange price should be set with an objective to reduce the overall cost or the peak demand of the grid. Hence, the exchange between the prosumers will not depend on the price at a certain time but also on the exchange prices at other times. The exchange will also depend on the prices of conventional energy which has to be bought from the grid. This paper seeks to consider these interactions to propose a pricing mechanism for prosumers where the prosumers can exchange among themselves. 

We consider a scenario where the prosumers can sell energies to other prosumers. The time is slotted. Each prosumer has certain demand to be fulfilled. Some demands are deferrable, i.e., the prosumer only cares about the total demand to be fulfilled over  multiple time slots. Each prosumer may also have a renewable energy harvesting device, and a storage device. Given the exchange price, the prices of the conventional energy, and the renewable energy prediction, each prosumer decides how much to buy from the conventional energy from the grid, how much to store, how much to sell (if any) to other prosumers, and how much to buy from other prosumers at each instance. {\em The prosumers are selfish entities which only want to maximize their own payoffs.}

We consider that there exists a platform which sets the exchange price. Examples of platform may be the load serving entity (LSE), or the aggregator, or a private retailer\footnote{Recently, there are private retailers which are providing exchange service among the distributed generators\cite{distr_retail}.}. The platform wants to maximize the amount of exchange such that the overall consumption of the conventional energy is reduced. However, as mentioned before, finding the optimal pricing strategy is inherently challenging because the platform does not know the exact utilities of the prosumers.  

We consider a real-time price for the conventional energy. The Independent system operator (ISO) sets a price which is a linear function of the consumption of all the users. The price is realized when the loads of all the users are provided to the ISO. Thus, each prosumer has to consider the strategy of other prosumers since the price inherently depends on the actions of other prosumers. The key characteristic of the pricing strategy is that the price is set in a dynamic manner, and depends on the total consumption of the conventional energies across the users. We, thus, focus on finding the exchange price which will reduce the overall consumption of the peak load. 

Since the prosumers are selfish entities, we use a game theoretic model to characterize the interaction among the prosumers.  Each prosumer selects its best response by anticipating the behavior of others. However, the strategy space is also dependent on the strategy of the other prosumers. For example, the maximum energy bought by a prosumer ($A$, say) from a prosumer ($B$, say) is limited by the amount the prosumer $B$ wants to sell at a certain time. Different prosumers may not sell all the excess energy, rather, they can store the energy and sell it only during the peak period. Thus, finding an equilibrium strategy is inherently challenging. We resort to the concept of generalized Nash equilibrium. 

We show that the game is a potential game. In fact, the potential game is strictly concave when the utility functions of the users is strictly concave, and thus, admits a unique generalized Nash equilibrium. 
However, the above equilibrium is not an optimal solution of the scenario where the sum of the prosumers' utilities are maximized. In other words, the equilibrium is not an efficient one, however, if the renewable energy generation is high, the efficiency is also high. 

Subsequently, we propose a distributed algorithm to find the equilibrium, and the optimal price the platform should set for the exchange of energy. In the algorithm, the platform first sets a low price for each prosumer. Each prosumer then, decides the total load, the amount of energy to be bought from other prosumers, and the amount of energy to be sold.  While taking its decision, a prosumer takes the strategy of the previous iteration of other prosumers as an estimate of the current strategy of the prosumers. Thus, a prosumer does not need to know the utilities of other prosumers. The platform then increases the exchange price  by an $\epsilon$ amount for those prosumers for which the supply is less than the demand. The process continues till the supply of all the prosumer all exceeds the demand.

We show that such a simple distributed algorithm converges to the equilibrium strategy of the prosumers using the concept of the {\em fictitious play}. The exchange price also converges to the optimal price which maximizes the overall exchange of energy among the prosumers. The convergence is also in the exponential order. We, empirically, show that such a pricing strategy reduces the conventional energy consumption in Section~\ref{sec:simulation}.  


\section{Related Literature}
Demand response pricing has already been studied  \cite{albadi,li_low,rahimi,low3,aalami,nguyen}. The analysis of all  these papers is based on the assumption that consumers are price takers. They find the optimal price based on the estimated demand profile of the consumers. However, in real time, the consumers may consume more as compared to the estimate. The consumers will still pay a marginal price according to the demand which has been estimated a priori. We consider a real time pricing mechanism where the price is set based on the real time total demand. Thus, the users (prosumers) have to strategize their demand based on the anticipation of the strategy of the other users. The above papers also did not consider the scenario where the users can sell energies among themselves in a separate market setting. 

Real time pricing has been considered recently  \cite{chen,yoon,maharajan,yu,conejo}. \cite{eksin} proposed a game theoretic model where the consumers face a price which varies depending on the total load to the grid. However, the above papers did not consider the exchange market which is considered in our work. The game is now a coupled constrained game, where the constraints of each user (prosumer) also depends on the other prosumers' decisions since the a consumer can only know the amount a prosumer wants to share. Thus, an optimal exchange price is required which will enhance the exchange of energy among the prosumers. Finally, \cite{eksin} did not consider the temporal correlation of the demand of the users. The utility of the users, and the demand of the users often have temporal correlation. For example, in the EV charging, the users may need a certain amount of charged battery before a certain deadline. The consumption of conventional energy, and the exchange of energy thus, will also depend on the prices across a certain horizon. Finally, each prosumer may also have a storage device, and thus, may defer the selling of energy at a later time if the selling price is higher. 

Energy exchange among the users in a micro-grid setting has been considered \cite{lee_guo,saad,rad,saad2}. The papers considered that the buyers and sellers will exchange energy among each other by bidding asking price, and selling price respectively. However, the price of the conventional energy is assumed to be fixed and independent of the amount of energy exchanged among the buyers and sellers. However, the price for the conventional energy may also drop if the total demand of conventional energy is low. Further, the above papers assumed that the demand consists of non-deferrable loads. However, in practice users may defer the consumption of energy if it is better to exchange energy when the exchange price is high. We provide an equilibrium strategy of the users considering the real time pricing mechanism where the price increases based on the total consumption (rather than a fixed price), and we also provide pricing strategies to the platform for the exchange market. 


\section{System Model}
In this section, we will  describe the pricing behavior (Section~\ref{sec:price}), prosumer's utility functions (Section~\ref{sec:utility}), the constraints every prosumer has to satisfy (Section~\ref{sec:constraints}). Finally, we formulate the optimization problem for the  prosumer decision (Section~\ref{sec:formulation}). 
%
%
%

\subsection{Pricing structure}\label{sec:price}
We assume that the time is slotted. Each prosumer can act either as producer or consumer in a slot. Prosumers also decide how much to sell (producer) or buy (consumer) in the exchange market. The prosumers wants to decide  its demand over a certain time horizon $T$ (e.g., over a day, or over 8-9 hours period). Let the number of prosumers be $N$. Note that a prosumer does not need to have energy generation capability.

Let $l_{i,t}$ be the demand of prosumer $i$ to grid at during time $[t,t+1)$. The price selected by the ISO for the consumption of conventional energy is denoted as 
\begin{align}\label{eq:price}
p_t(\sum_{i}l_{i,t})=\gamma_t\sum_{i}l_{i,t}
\end{align}
where $l_{i,t}$ is the demand of the prosumer $i$ to the grid at time $t$. Note that $\sum_{i}l_{i,t}$ is the total load of the grid at time $t$. $\gamma_t$ is the parameter that is selected by the grid and depends on the time of the day.  How the grid should select $\gamma_t$ is out of scope for this paper and left for the future.  Note that setting a price which varies linearly with the total demand is quite common in the literature \cite{eksin,ECTA}. 

Note that the price of the grid not only depends on the load of the user $i$, but also on the loads of the other prosumers. The price is realized only when the load of all the prosumers is known. Thus, it leads to a game among the users, where the payoff inherently depends on the strategies chosen by the other users. A user is not aware of the exact demand of the other users, thus, it leads to an {\em incomplete information game}. Note that unlike the price which is independent of the total load, in the price strategy stated in (\ref{eq:price}), there will always be an upper limit of the load. This is because a user will try to resist consuming too much energy. {\em The price is the same to all buyers of conventional energy.}



\textbf{Exchange Price}: We assume that there is a platform which sets the prices for exchange among the prosumers. The prosumers select the amount of energy to be sold at different times in a blockchain. The prosumers also updates the amount of energy to be bought from other prosumers in a blockchain.  If the prosumer $i$ buys  energy from prosumer $j$ it pays an amount $\mu_{j,t}$ per unit of energy during time $[t,t+1)$. The platform obtains a fixed fee for every unit of energy exchanged. Hence, the platform will want to maximize the overall amount of energy to be exchanged among the prosumers. However, an incentive has to be provided to the prosumers to sell the excess energy to other prosumers. 

 \subsection{Prosumer's payoff}\label{sec:utility}
 We, now, compute the expression of each prosumer's payoff. 
 Let the prosumer $i$ buys $q_{i,j,t}$ amount of energy from prosumer $j$ during $[t,t+1)$. Let the prosumer $i$ sells an amount of $s_{i,t}$ during the time duration. Let the prosumer $i$'s demand (or, total consumption) during time duration $[t,t+1)$ be $d_{i,t}$. The user's utility $U_i$ only depends on the consumption $d_{i,t}$. 
 
 Let the prosumer $i$'s demand (or, total consumption) during time duration $[t,t+1)$ be $d_{i,t}$. The user's utility $U_i$ only depends on the consumption $d_{i,t}$. Thus, the user's profit of total payoff is 
 \begin{eqnarray}\label{obj}
 \sum_{t=1}^{T}\mu_{i,t}s_{i,t}-\sum_{j\neq i}\mu_{j,t}q_{i,j,t}-l_{i,t}\gamma_t(\sum_{i}l_{i,t})+U_i(d_{i,t})
 \end{eqnarray}
 The user's\footnote{In the sequel, we use the notation user and prosumer interchangeably.} payoff inherently depends on the decision of the other users. Note that the amount of energy a prosumer buys from the other prosumers also inherently depends on the amount of energy sold by the prosumer. 
 
 \begin{assumption}
 We assume that $U_i(\cdot)$ is concave, and continuous. 
 \end{assumption}
User's preference (or, comfort) increases with the consumption. However, the rate of increase of preference decreases with the demand $d_{i,t}$. The utility function is randomly drawn from a distribution. {\em The ISO or the other prosumers are unaware of the exact utility of a prosumer.} However, there may be a correlation among the prosumers. 

The utility may also have a temporal correlation. For example, a prosumer can only attain a utility if the total demand is satisfied within a deadline. The prosumer will have zero utility if the demand is unsatisfied within the deadline. Such kind of utility arises for the deferrable loads such as EV charging, or dishwasher. 

 \subsection{Problem Formulation: Constraints}\label{sec:constraints}
 In this subsection, we describe the system of constraints that each  prosumer has to satisfy while taking its own decision. 
 
 Some users may have a deferrable load such as EV charging, and dishwasher. The demand only needs to be fulfilled within a certain time. For example, the EV needs to be ready before 8 am (e.g., if the user is going to work). However, the individual load may vary over time. We denote the set of deferrable appliances as $\mathcal{A}_i$.  Suppose the load assigned to appliance $j$ of user $i$ is $x_{i,j,t}$ for the time duration $[t,t+1)$. Hence, we have 
 \begin{eqnarray}
 \sum_{t=1}^{T_j}x_{i,j,t}\geq X_j \forall i, \forall j\in\mathcal{A}_i\label{constr1}
 \end{eqnarray}
where $X_j$ is the amount of load required for appliance $j$.

Let the set of non-deferrable load of prosumer $i$ is assumed to be $\mathcal{B}_i$. Let $x_{i,j,t}$ be the load for appliance $j\in \mathcal{B}_i$. 
The user's total consumption during time $[t,t+1)$ is thus
\begin{eqnarray}\label{constr1b}
d_{i,t}=\sum_{j\in \mathcal{A}_i\cup\mathcal{B}_i}x_{i,j,t}  
\end{eqnarray}
The demand of prosumer $i$ has to exceed the demand for non-deferrable appliances at each instance. 
Thus, 
\begin{eqnarray}
d_{i,t,max}\geq d_{i,t}\geq d_{i,t,min}\label{constr2}
\end{eqnarray}
where $d_{i,t,max}$, and $d_{i,t,min}$ are known beforehand. 

Let the amount of energy sold to prosumer $j$ by prosumer $i$ in the exchange market during time $[t,t+1)$ be denoted as $s_{i,j,t}$. The total amount of load sold by prosumer $i$ $s_{i,t}$ must follow the following constraint. 
\begin{eqnarray}\label{constr3}
s_{i,t}=\sum_{j\neq i}s_{i,j,t}.
\end{eqnarray}

Similarly, the energy bought by prosumer $i$ from prosumer $j$ is denoted as $q_{i,j,t}$ during time $[t,t+1)$. Thus, 
\begin{eqnarray}\label{constr3a}
q_{i,t}=\sum_{j}q_{i,j,t}.
\end{eqnarray}
The energy bought from user $j$ must be smaller than the total energy sold by user $j$.  Also note that the sum of energies bought by all the prosumers have to be equal to the energy sold by prosumer $j$. Thus, we have
\begin{eqnarray}\label{constr3ab}
\sum_{i}q_{i,j,t}= s_{j,t} \forall j.
\end{eqnarray}
 Note that $s_{j,i,t}=q_{i,j,t}$. 
 
 Each prosumer has a renewable energy harvesting device which harvests $\bar{E}_{i,t}$ amount of energy \footnote{A prosumer may not have any renewable energy harvesting device. In that case, the renewable energy will be $0$.} during time $[t,t+1)$. The prosumer $i$ may also have a battery with capacity $B_{i,max}$. If the prosumer does not have any battery then $B_{i,max}$ is $0$. The state of the battery is $B^{i,t}$. The amount of energy discharged from the battery is $e_{i,t}$, and charged to the battery is $b_{i,t}$. Thus, we have
 \begin{eqnarray}
 B^{t+1}=B^{t}+\bar{E}_{i,t}-e_{i,t}+b_{i,t}\label{constr5}
 \end{eqnarray}
 Note that the renewable energy generation is a random process. Hence, a prosumer will only have the estimate of $\bar{E}_{i,t}$, rather than the exact realized value. 
 
The state of the battery can not be less than $0$.  The state of the battery is also required to be a specific value at the end of the horizon. Most often, the state of the battery is kept to be same as the start of the day. Thus, we have
 \begin{eqnarray}
 B^{T+1}=B^{1}, 0\leq B^{t}\leq B_{max}\label{constr6}
 \end{eqnarray}
 
 Note that, the energy bought from the grid as well as from the other prosumers also have a transmission loss. Let $r_{i,j}\leq 1$ be the transmission efficiency between the users $i$ and $j$, and $r_i$ be the transmission efficiency between the user $i$ and the grid. Thus, the total consumption of prosumer $i$ during time $[t,t+1)$ is given by
 \begin{eqnarray}\label{constr7}
 d_{i,t}=\sum_{j}q_{i,j,t}r_{i,j}-s_{i,t}+l_{i,t}r_i-b_{i,t}\eta_{d}+e_{i,t}\eta_{c},
 \end{eqnarray}
 where $\eta_{d}\leq 1$ and $\eta_{c}\leq 1$ are respectively the discharging, and charging efficiency from the battery. Note that only a portion of energy bought by the prosumer can be used because of the transmission loss.
 
 We assume that 
 \begin{assumption}\label{assum:trnseff}
 $r_{i,j}> r_{i}$ for all $i$, and $j$. 
 \end{assumption}
We consider a geographically co-located prosumers. The extension of our work for prosumers situated in a vast geographical area is left for the future. Generally, the transmission efficiency from a neighbor prosumer should be high compared to obtaining energy from the grid. This is because the grid often times obtain energy from a far greater distance than the distance between local neighbors. 

 \subsection{The formulated problem}\label{sec:formulation}
 We are now ready to specify the optimization problem that each prosumer solves for every time horizon $[t,t+T)$. 
 \begin{eqnarray}
& P_i:  \text{maximize } & \sum_{t=1}^{T}\mu_{i,t}s_{i,t}-\sum_{j\neq i}\mu_{j,t}q_{i,j,t}\nonumber\\ & &-l_{i,t}\gamma_t(\sum_{i}l_{i,t})+U_i(d_{i,t})\nonumber\\
& \text{subject to } & (\ref{constr1})-(\ref{constr7}).\nonumber\\
& \text{var} & d_{i,t}, l_{i,t}, q_{i,t}, s_{i,t}, e_{i,t}, \nonumber\\ & &b_{i,t},q_{i,j,t}, s_{i,j,t} \geq 0\label{constr8}.
\end{eqnarray}
The problem is convex when the prosumer knows others' strategies. However, a prosumer is unaware of the strategies set by other prosumers. 

We must have $q_{i,t}s_{i,t}=0$. Later, we propose an algorithm which sets the prices in a way such that $s_{i,t}q_{i,t}=0$. 

\subsection{Platform's Objective}
The platform wants to maximize the exchange of energy among the prosumers or minimize the conventional energy consumption. Thus, the objective of the platform is the following
\begin{align}\label{eq:plat}
& \text{minimize } & \sum_{t=1}^{T}\sum_{i}l_{i,t}\nonumber\\
& \text{var }:& \mu_{j,t} \geq 0\forall j
\end{align}
However, the platform does not know exact utility parameter, and the user specific parameters such as the renewable energy generation, and the battery capacity of the prosumers. Note that the platform's objective (cf.(\ref{eq:plat})) has to be minimized carefully. If the exchange price is small enough the prosumers will not sell. On the other hand, if the exchange price is large,  prosumers will not buy enough amount from the prosumers who have excess energy. Thus, selecting the optimal $\mu_{j,t}$ is challenging for the prosumers. In Section~\ref{sec:distributed} we show how the prosumers should select the prices which will optimize the above problem.

\section{Solution Methodology}
Each prosumer is a selfish entity. It is only entitled to maximize its own payoff. The prosumer's payoff inherently depends on the other prosumers' strategies. We, thus, formulate the problem of prosumer's decision as a game theoretic model. We show that the problem is coupled constrained game since the constraints of the prosumers are coupled. We seek to obtain  generalized Nash equilibrium. In Section~\ref{sec:potential} we show that the game admits a concave potential function, and, thus, there exists a unique  generalized Nash equilibrium. Leveraging on the concave potential game, we propose a distributed algorithm (Section~\ref{sec:distributed}) where each prosumer updates its strategy based on the decision taken by the other prosumers in the previous period. However, the equilibrium is {\em not efficient} as it does not solve the problem where the objective is to maximize the sum of the payoffs of the prosumers. We show that as the renewable energy generation increases, the efficiency increases. 

\subsection{Nash Equilibrium}
Each prosumer only wants to maximize its own payoff.  Each prosumer's optimal decision depends on other users' strategies. However, a user is not aware of the strategies of other users. Hence, a game theoretical model is most apt to find the strategy of a prosumer. 
\begin{definition}
Let $a_i$ be the strategy of player $i$, and $a_{-i}$ be the strategy vector of all players except player $i$. 
\end{definition}
In a game theoretic setting, Nash equilibrium is used as an equilibrium concept. Naturally, the question arises whether there exists a Nash equilibrium in the game. The Nash equilibrium is defined in the following
\begin{definition}
The strategy profile $(a_i,a_{-i})$ is a Nash equilibrium strategy profile if the following holds 
$E[u_i(a_i,a_{-i})]\geq E[u_i(a^{\prime}_i,a_{-i}]$ for all $i$ and $a^{\prime}_i\in S_i$ where $S_i$ is the set of strategies of the player $i$. 

If the strategy space of a player depends on the other players, the game is known as coupled constrained game. The generalized Nash equilibrium is the corresponding equilibrium concept in the coupled constrained game.
\end{definition}
Specifically, in a Nash equilibrium strategy profile, any player can not have higher payoff by deviating unilaterally from the prescribed strategy profile. 
In many games, a Nash equilibrium may not exist. 
Even if the Nash equilibrium exists, the question arises whether it is unique. If it is unique,  the question  is whether it is practically implementable. Later, we answer the above questions  in an affirmative sense by applying the theory of potential game. 

We denote the set of strategy of prosumer $i$ as $\mathbf{a}_{i,t}$, where $\mathbf{a}_{i,t}=\{d_{i,t},\mathbf{Q}_{i,t}, q_{i,t},s_{i,t},\mathbf{S}_{i,t},e_{i,t},b_{i,t},l_{i,t}\}$.
$\mathbf{Q}_{i,t}$ ($\mathbf{S}_{i,t}$, resp.)  consists of the component $q_{i,j,t}$ ($s_{i,j,t}$, resp.)  for all $j$. The set $\mathbf{a}_{i,t}$ has to be feasible.

Note that because of the constraints in (\ref{constr3ab}) the strategy space of a prosumer also depends on the strategy of other prosumers. Hence, the above game belongs to the coupled constrained game. We, next, compute the generalized Nash equilibrium using the concept of potential game. 
\subsection{Potential Game}\label{sec:potential}
We, now, show that the game is a potential game. First, we introduce the definition of the potential game--
\begin{definition}
Suppose user $i$'s payoff function is $u_{i}(a_i,a_{-i})$ where $a_i$ is the strategy of user $i$, and $a_{-i}$ denotes strategies of other users apart from user $i$. Then a game is potential, if and only if there exists a function $\Phi$ such that $u_i(a^{\prime}_i,a_{-i})-u_i(a_i,a_{-i})=\Phi(a_i^{\prime},a_{-i})-\Phi(a_i,a_{-i})$, $\forall i$. 

If $\Phi(\cdot)$ is concave (strictly) then the game is (strictly) concave potential game \cite{potential}. 
\end{definition}
In the following, we show that the game we introduced is a potential game. 

\begin{theorem}
The game is potential. If $U_{i}(\cdot)$ is strictly concave $\forall i$, then it is a strictly concave potential game, and thus, it has a unique pure generalized Nash equilibrium. 
\end{theorem}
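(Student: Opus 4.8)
The plan is to exhibit an explicit exact potential $\Phi$, verify the defining identity, and then read uniqueness off the strict concavity of $\Phi$ over the (convex) coupled feasible region. Writing $L_t=\sum_k l_{k,t}$ for the aggregate grid load, my candidate is
\begin{equation*}
\Phi=\sum_{t=1}^{T}\Bigl[\sum_i\Bigl(\mu_{i,t}s_{i,t}-\sum_{j\neq i}\mu_{j,t}q_{i,j,t}+U_i(d_{i,t})\Bigr)-\frac{\gamma_t}{2}\Bigl(L_t^2+\sum_k l_{k,t}^2\Bigr)\Bigr].
\end{equation*}
The three inner sums collect exactly those terms of the payoff (\ref{obj}) that involve only player $i$'s own decisions, so under a unilateral deviation of player $i$ they contribute to $\Phi$ precisely what they contribute to $u_i$.

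The only genuinely coupled term in (\ref{obj}) is the conventional-energy cost $-\gamma_t l_{i,t}L_t$, and this is where the congestion-type quadratic in $\Phi$ does the work. I would verify the exact-potential property by differentiation: $\partial u_i/\partial l_{i,t}=-\gamma_t(L_t+l_{i,t})$, while $\partial/\partial l_{i,t}\bigl[-\tfrac{\gamma_t}{2}(L_t^2+\sum_k l_{k,t}^2)\bigr]=-\gamma_t(L_t+l_{i,t})$ as well, and for every remaining coordinate $\partial\Phi/\partial a_i$ trivially equals $\partial u_i/\partial a_i$. Since $\nabla_{a_i}\Phi=\nabla_{a_i}u_i$ for all $i$, integrating along any feasible segment gives $u_i(a_i',a_{-i})-u_i(a_i,a_{-i})=\Phi(a_i',a_{-i})-\Phi(a_i,a_{-i})$, so $\Phi$ is an exact potential and the game is potential, as required for the first assertion.

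For the concavity claim I would first record a cancellation: by the market-clearing constraint (\ref{constr3ab}), $\sum_i\sum_{j\neq i}\mu_{j,t}q_{i,j,t}=\sum_j\mu_{j,t}s_{j,t}$, so on the feasible set the monetary exchange terms are pure internal transfers and cancel against $\sum_i\mu_{i,t}s_{i,t}$. Hence over the feasible region $\Phi$ collapses to $\sum_t[\sum_i U_i(d_{i,t})-\tfrac{\gamma_t}{2}(L_t^2+\sum_k l_{k,t}^2)]$. The quadratic form $L_t^2+\sum_k l_{k,t}^2$ has Hessian $\gamma_t(\mathbf 1\mathbf 1^\top+I)\succ0$ in the load variables, and $\sum_i U_i(d_{i,t})$ is strictly concave in the consumption variables whenever each $U_i$ is strictly concave; being separable across the two groups, $\Phi$ is jointly strictly concave in $(d_{i,t},l_{i,t})$. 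As the feasible set is convex (all of (\ref{constr1})--(\ref{constr7}) are affine), the maximizer of $\Phi$ is unique in these coordinates, and since for potential games the maximizers of $\Phi$ over the shared feasible set coincide with the (variational) generalized Nash equilibria, this delivers a unique equilibrium consumption/grid-load profile.

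The hard part will be bridging the gap between \emph{strictly concave in $(d,l)$} and \emph{strictly concave in the full strategy vector}: the exchange variables $q_{i,j,t},s_{i,j,t}$ and the storage variables $b_{i,t},e_{i,t}$ enter $\Phi$ only through transfers that cancel, so $\Phi$ has flat directions and one cannot invoke strict concavity in all coordinates at once. To close the argument I would show that, given the now-unique optimal $(d_{i,t},l_{i,t})$, the affine balance relations (\ref{constr7}), (\ref{constr5})--(\ref{constr6}) and the clearing condition (\ref{constr3ab}) pin down the remaining variables; failing a full pin-down, the equilibrium notion should be read in the physically meaningful load/consumption coordinates, which are precisely what the platform's objective (\ref{eq:plat}) depends on. Establishing that the routing of exchanged energy and the charge/discharge schedule are uniquely determined is the delicate step, whereas the potential construction and verification above are routine.
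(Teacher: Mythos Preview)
Your approach is essentially the paper's: you construct the same exact potential (your coupling term $-\tfrac{\gamma_t}{2}\bigl(L_t^2+\sum_k l_{k,t}^2\bigr)$ is what the paper writes as $-\gamma_t\sum_{i,j}l_{i,t}l_{j,t}$, read as a sum over unordered pairs) and verify the defining identity, then appeal to concavity of the $U_i$. Your treatment of the flat directions in the exchange and storage variables, and hence the scope of the uniqueness claim, is in fact more careful than the paper's proof, which simply asserts that $\Phi$ is concave because each $U_i$ is and does not discuss those coordinates.
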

\begin{proof}
Let us assume that  $u_i(a_i,a_{-i})=\sum_{t=1}^{T}\mu_{i,t}s_{i,t}+U_i(d_{i,t})-\sum_{j\neq i}\mu_{j,t}q_{i,j,t}-l_{i,t}\gamma_t(\sum_{i}l_{i,t})$ where $a_i$ is itself a vector consisting of $d_{i,t}, l_{i,t}, q_{i,t}, s_{i,t}, e_t,b_t$. Now, consider the following potential function
\begin{align}
\Phi(a_i,a_{-i})=& \sum_{i}(\sum_{t=1}^{T}\mu_is_{i,t}+U_i(d_{i,t})-\sum_{j\neq i}\mu_jq_{i,j,t}+U_i(d_{i,t}))\nonumber\\-& \sum_{t=1}^{T}\gamma_t\sum_{i,j}l_{i,t}l_{j,t}
\end{align}
$\Phi(\cdot)$ is a potential function since
\begin{align}
u_i(a^{\prime},a_{-i})-u_i(a_i,a_{-i})=\Phi(a^{\prime},a_{-i})-\Phi(a_i,a_{-i}).
\end{align}

$\Phi(\cdot)$ is concave since $U_i(\cdot)$ is concave.
\end{proof}

Thus, the equilibrium of the game is given by the optimal solution of the potential function subject to the constraints. 

We, now, state the optimization problem, the solution of which will give the generalized Nash equilibrium.
\begin{eqnarray}\label{eq:pot}
P_{\mathrm{potential}}:  \text{maximize } & \sum_{t=1}^{T}\mu_{i,t}s_{i,t}+U_i(d_{i,t})\nonumber\\ & -\sum_{j\neq i}\mu_{j,t}q_{i,j,t}-l_{i,t}\gamma_t(\sum_{i}l_{i,t})\nonumber\\
\text{subject to } & (\ref{constr1})-(\ref{constr8}).
\end{eqnarray}
Since the potential game is concave,  the solution of the problem can be obtained using the convex optimization tool. 

\subsection{Efficiency of the equilibrium}
We, now, investigate the efficiency of the generalized Nash equilibrium obtained in the last section. Our result shows that the game is not efficient. {\em Specifically, the equilibrium strategy profile of the game is not optimal for maximizing the sum of the profits of the users.}In the following we compute the efficiency gap.
Let
\begin{align}\label{eq:p}
P: \text{maximize } & \sum_{t=1}^{T}\mu_{i,t}s_{i,t}+U_i(d_{i,t})-\sum_{j\neq i}\mu_{j,t}q_{i,j,t}\nonumber\\ &-l_{i,t}\gamma_t(\sum_{i}l_{i,t}) -\gamma_t(\sum_{i}l_{i,t})^2\nonumber\\
\text{subject to } & (\ref{constr1})-(\ref{constr8}).
\end{align}
Let $\mathbf{a}_{i,t}=(l_{i,t},s_{i,t},q_{i,t},b_{i,t},e_{i,t},d_{i,t})$ be the decision vector for each user $i$ at time $t$. The problem $P$ is a convex optimization problem as the objective function is concave, and the constraints are linear. 

We, next, introduce some notations. 
\begin{definition}
Let $\mathbf{a}_{i,t}^{*}$ be the optimal solution of the problem $P$. Let $\mathbf{a}_{i,t}^{eq}$ be the equilibrium strategy profile. Specifically, it is the optimal solution of $P_{\mathrm{potential}}$.
\end{definition}
Now, we are ready to characterize the efficiency ratio of the equilibrium strategy profile. 
\begin{definition}
Let the objective value attained by the strategy profile $\mathbf{a}^{eq}_{i,t}$ in the problem P be $P^{eq}$, and let $P^{*}$ be the optimal value. Then, the efficiency is 
\begin{align}
\eta=\dfrac{P^{eq}}{P^{*}}.
\end{align}
\end{definition}

The efficiency is always less than or equal to $1$. Note that in the equilibrium strategy profile, the user consumes larger energy as compared to the optimal solution from the grid because of the additional term. 

\begin{proposition}
The efficiency decreases as the number of prosumers increases. 

The efficiency $\eta$ increases as the amount of renewable energy increases. 
\end{proposition}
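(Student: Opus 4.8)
\emph{Proof plan.} The plan is to exploit the fact that the equilibrium program $P_{\mathrm{potential}}$ and the social program $P$ are optimized over the \emph{same} feasible set defined by (\ref{constr1})--(\ref{constr8}), and that their objectives differ only by the extra convex penalty $G(\mathbf{a})=\sum_{t=1}^{T}\gamma_t(\sum_i l_{i,t})^2$ on the aggregate grid load. Writing $F(\mathbf{a})$ for the common part (the $P_{\mathrm{potential}}$ objective, which already contains one copy of the load term), we have that $\mathbf{a}^{eq}$ maximizes $F$ while $\mathbf{a}^{*}$ maximizes $F-G$, and $P^{eq}=F(\mathbf{a}^{eq})-G(\mathbf{a}^{eq})$, $P^{*}=F(\mathbf{a}^{*})-G(\mathbf{a}^{*})$. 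First I would record the two elementary optimality inequalities $F(\mathbf{a}^{eq})\ge F(\mathbf{a}^{*})$ and $F(\mathbf{a}^{*})-G(\mathbf{a}^{*})\ge F(\mathbf{a}^{eq})-G(\mathbf{a}^{eq})$; together they give $P^{*}\ge P^{eq}$ (hence $\eta\le 1$) and show that the loss $P^{*}-P^{eq}$ is controlled entirely by how much the two solutions differ in their aggregate grid loads, since the objectives differ only through $G$.

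For the renewable-energy claim, the key is that raising every $\bar{E}_{i,t}$ enlarges the set of attainable consumption profiles at a \emph{lower} grid load: if a profile is feasible for a given harvest level, then after increasing $\bar{E}_{i,t}$ one can keep $d_{i,t}$ and all exchange variables fixed while decreasing $l_{i,t}$ through the balance (\ref{constr7}) together with the battery recursion (\ref{constr5})--(\ref{constr6}). I would use this to argue that the optimal aggregate loads $\sum_i l^{eq}_{i,t}$ and $\sum_i l^{*}_{i,t}$ are non-increasing in the harvest level, so that both $G(\mathbf{a}^{eq})$ and $G(\mathbf{a}^{*})$ shrink. In the limiting regime where renewable generation serves all demand with $l_{i,t}=0$, the penalty $G$ vanishes at both solutions and the two programs coincide, giving $\eta=1$; a monotone-convergence argument then yields that $\eta$ increases toward $1$ as the harvest grows.

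For the number-of-prosumers claim, I would pass to the symmetric specialization (identical $U_i$ and device parameters), so the equilibrium and the optimum are both symmetric and the aggregate load is $L=N\ell$. Writing the stationarity condition for a single $l_{i,t}$ gives $U'(d)\,r=\gamma_t(N+1)\,\ell^{eq}$ at equilibrium versus the strictly steeper condition $U'(d)\,r=4\gamma_t N\,\ell^{*}$ at the social optimum, because each prosumer internalizes only its own share of the aggregate externality in $G$. Since $4N>N+1$ for all $N\ge 1$, this shows the equilibrium over-consumes, $\ell^{eq}>\ell^{*}$; substituting back produces $\eta$ as an explicit function of $N$, and as $N$ grows the fraction of the marginal social cost a single prosumer absorbs tends to $0$, the relative gap $G(\mathbf{a}^{eq})/P^{*}$ grows, and $d\eta/dN<0$.

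The hard part will be the monotonicity and sensitivity steps rather than the algebra. Establishing that the optimal grid loads are genuinely monotone in the harvest level, and that the cross-$N$ comparison is well posed, is delicate because the feasible set couples the time slots through the deferrable-load constraint (\ref{constr1}), the cyclic battery condition (\ref{constr6}), and the exchange-balance constraint (\ref{constr3ab}) that ties prosumers together; a naive perturbation can be blocked by an active battery or capacity bound. I would handle this by a constructive feasibility perturbation that explicitly exhibits the improved profile, combined with the uniqueness of the equilibrium from the strictly concave potential game (the Theorem above), which pins down $\mathbf{a}^{eq}$ and lets the KKT comparison be carried out unambiguously.
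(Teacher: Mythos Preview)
The paper does not actually prove Proposition~1: after stating it, the authors give only two sentences of heuristic commentary (``the efficiency increases when the renewable energy supply increases'' and ``as the number of users increases the efficiency decreases''), with no argument, no computation, and no use of the potential or social objective. So there is no ``paper's proof'' to compare against; your plan is already far more detailed than anything the paper supplies.

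That said, your outline has two genuine soft spots that you yourself flag but do not close. For the renewable-energy claim, you show that $G(\mathbf{a}^{eq})$ and $G(\mathbf{a}^{*})$ are nonincreasing in the harvest and vanish in the limit, but this does \emph{not} yield monotonicity of the ratio $\eta=P^{eq}/P^{*}$: both numerator and denominator move (the feasible set enlarges, so $F$ at the optima also changes), and a limit of $1$ together with decreasing penalties is compatible with $\eta$ oscillating along the way. A correct argument needs either an explicit sensitivity computation of $d\eta/d\bar E$ from the KKT systems of the two programs, or a direct bound of the form $1-\eta\le G(\mathbf{a}^{eq})/P^{*}$ together with a proof that this upper bound is itself monotone, which requires controlling $P^{*}$ from below as the harvest grows. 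For the $N$-dependence, restricting to the symmetric case and comparing first-order conditions is the right idea, but the paper's $P$ and $P_{\mathrm{potential}}$ are written inconsistently (the load term in the displayed potential does not actually reproduce the individual best-response FOC), so your coefficients $(N{+}1)$ versus $4N$ depend on which reading you adopt; with the standard Cournot potential the social FOC coefficient is $2N$, not $4N$. More importantly, ``increasing $N$'' is not a comparative static on a fixed feasible set: adding a prosumer adds new variables, new harvest, and new coupling constraints~(\ref{constr3ab}), so you must specify how the replica is constructed before $d\eta/dN<0$ is even well posed. Your plan acknowledges this but does not resolve it; the paper does not either.
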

The above proposition shows that the efficiency increases when the renewable energy supply increases. Thus, as the renewable energy penetration increases, the equilibrium becomes closer to the optimal solution of the sum of the prosumers' utility maximization problem. 

If the load  to the grid increases, the efficiency decreases.  However, the efficiency is never $0$. It is always upper bounded above $0$. However, as the number of user increases the the efficiency decreases. 

\subsection{Distributed Solution}\label{sec:distributed}
Though $P_{\mathrm{potential}}$ is a convex optimization problem, one needs to know each prosumer's utility, and constraints. Thus, a centralized solution is difficult to obtain in practice. In this section, we show that how a distributed algorithm converges to the generalized Nash equilibrium. 

Each user updates its strategy for a certain time horizon $T$, which we denote as epoch.  At epoch $k$, the prosumer decides for the time slots $kT+1,\ldots, (k+1)T$. The platform initially selects exchange prices $\mu_{j,t}$ for each prosumer $j$. The prosumer then updates its strategy profile for each time slot in the epoch. The platform then again updates the price until the process converge. The algorithm \textbf{ALGO-DIST} is  detailed in the following 

\textbf{ALGO-DIST}:\\
\begin{enumerate}
\item Initialization: For each prosumer $i$, the price is set at $\mu_{i,t}^{1}$ at a some minimum possible value. 
\item At iteration $k\geq 1$, each prosumer $i=1,\ldots N$, updates its strategy by obtaining $\mathbf{a}^{k}_{i,t}=(l_{i,t}^{k},d_{i,t}^{k}, q_{i,t}^{k},s_{i,t}^{k}, e_{i,t}^{k}, b_{i,t}^{k})$, while solving the following problem
\begin{align}
 \text{maximize } & \sum_{t=1}^{T}\mu_{i,t}^{k}s_{i,t}+U_i(d_{i,t})-\sum_{j\neq i}\mu_{j,t}^{k}q_{i,j,t}-l_{i,t}\gamma_tl_{i,t}\nonumber\\  & -\gamma_t\sum_{j\neq i}l^{k-1}_{i,t}U_i(d_{i,t}) -1/\alpha_k\sum_{t=1}^{T}||\mathbf{a}_{i,t}-\mathbf{a}^{k-1}_{i,t}||^2\nonumber\\
\text{subject to } & (\ref{constr1}), (\ref{constr2}), (\ref{constr3}), (\ref{constr3a}), (\ref{constr5}), (\ref{constr6}), (\ref{constr7}), (\ref{constr8}).\nonumber
\end{align}
\item If $s_{i,t}<\sum_{j\neq i}q_{j,i,t}$, set $\mu_{i,t}^{k+1}=\mu_{i,t}^{k}+\epsilon$, if $s_{i,t}\geq \sum_{j\neq i}q_{j,i,t}$, set $\mu_{i,t}^{k+1}=\mu_{i,t}^{k}$,  and go to step $2$.
\item If $s_{i,t}\geq \sum_{j\neq i} q_{j,i,t}$ for all $i$, then exit.
\item The user $i$ pays $l^{k}_{i,t}\gamma_{t}\sum_{j}l^{k}_{j,t}$ at time $t$ to the grid, and obtains the value of other users' load. 
\end{enumerate}
Here, $\alpha_k$ is the learning parameter, and it is set at $1/(k+1)$. The parameter puts a weight $\alpha_k$ on the optimal decision of the current stage. The above also ensures that the optimal decision at the current iteration is not very far away from the decision of the previous iteration. 

Note that a prosumer does not have constraint in (\ref{constr3ab}). Thus, the prosumer is using a relaxed version of the problem since the prosumer does not know the demand of a prosumer to other prosumers. The price is set according to the total supply and demand. In the converged solution, the constraint in (\ref{constr3ab}) is always satisfied. 

The following shows that the above algorithm converges to an equilibrium. 
\begin{theorem}
The distributed algorithm \textbf{ALGO-DIST} terminates after a finite number of iterations. The solution converges to the optimal solution of the potential game for small enough $\epsilon>0$. 

The price $\mu_{i,t}$ converges to the exchange price such that the constraint in (\ref{constr3ab}) is satisfied, and the platform's objective (cf.(\ref{eq:plat})) is minimized. 
\end{theorem}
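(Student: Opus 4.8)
The plan is to read ALGO-DIST as a primal--dual scheme in which the exchange prices $\mu_{j,t}$ play the role of Lagrange multipliers for the market-clearing constraint (\ref{constr3ab}) --- the single constraint each prosumer drops in the relaxed step 2. Under this reading, $P_{\mathrm{potential}}$ is a convex program (concave objective by Theorem~1, linear constraints), the per-prosumer problems are its Lagrangian subproblems with (\ref{constr3ab}) dualized, and the update $\mu^{k+1}_{i,t}=\mu^{k}_{i,t}+\epsilon$ whenever $s_{i,t}<\sum_{j\neq i}q_{j,i,t}$ is a fixed-step subgradient ascent on the dual variable driven by the residual supply-minus-demand. I would then prove the three assertions --- finite termination, convergence to the potential game's optimum, and price optimality --- separately, exploiting the fact that the price process runs on a slower (in fact finite) timescale than the strategy process.

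First I would establish finite termination. The key monotone-comparative-statics fact is that, with the other prosumers' strategies held fixed, the offered supply $s_{i,t}$ is non-decreasing and the aggregate demand $\sum_{j\neq i}q_{j,i,t}$ directed at prosumer $i$ is non-increasing in $\mu_{i,t}$, so the residual is non-decreasing in the price. Because the feasible loads are bounded through (\ref{constr2}), there is a finite price $\bar\mu$ above which supply necessarily exceeds demand; hence each $\mu_{i,t}$ is raised by the increment $\epsilon$ at most $\bar\mu/\epsilon$ times, the prices change only finitely often, and after some finite iteration $K$ they freeze at values $\mu^{*}_{i,t}$. This yields the finite-termination claim and, simultaneously, a well-defined limiting game.

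For the strategy convergence, after iteration $K$ the prices are constant and step 2 becomes proximal fictitious play on a fixed game. Here I would invoke Theorem~1: each prosumer's relaxed objective differs from the common potential $\Phi$ only by terms independent of its own action, so best-responding to the previous iterate $\mathbf{a}^{k-1}$ is equivalent to a block proximal-ascent step on $\Phi$ with regularizer $-\tfrac{1}{\alpha_k}\sum_t\lVert\mathbf{a}_{i,t}-\mathbf{a}^{k-1}_{i,t}\rVert^2$. With the diminishing weight $\alpha_k=1/(k+1)$ the successive-iterate differences are summably controlled and $\Phi$ is monotonically improved up to a vanishing error, so by strict concavity its unique maximizer is approached; this is the convergence of fictitious play in potential games, upgraded from the empirical play to the actual iterates by the proximal term. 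The termination test forces $s^{*}_{i,t}\ge\sum_{j\neq i}q^{*}_{j,i,t}$ while dual optimality (no further increase is triggered) gives complementary slackness, so as $\epsilon\to0$ the residual vanishes, (\ref{constr3ab}) is reinstated with equality, and the limit solves $P_{\mathrm{potential}}$, i.e. it is the generalized Nash equilibrium. I expect this step to be the main obstacle, precisely because prosumers best-respond to stale information $l^{k-1}_{j,t}$ while the prices are still moving for $k\le K$; the clean way around the coupling is the two-timescale observation above, after which the analysis reduces to a single fixed concave potential game, and making the ``small enough $\epsilon$'' dependence precise amounts to bounding the clearing residual and the induced suboptimality by an $O(\epsilon)$ term.

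Finally, for the platform's objective I would argue that $\mu^{*}$ is the minimal market-clearing price and that this maximizes the exchanged energy. Since the algorithm starts at the least admissible price and raises $\mu_{i,t}$ only under strict excess demand, $\mu^{*}_{i,t}$ is the smallest price at which the residual turns nonnegative; using the monotonicity above, the volume actually exchanged, $\min\{s_{i,t},\sum_{j\neq i}q_{j,i,t}\}$, increases with the price while demand-limited and decreases once supply-limited, so it peaks exactly at the clearing price. By the consumption balance (\ref{constr7}) each unit exchanged among prosumers displaces a unit drawn from the grid, so maximizing the total exchange is equivalent to minimizing $\sum_{t}\sum_i l_{i,t}$; hence $\mu^{*}$ minimizes the platform's objective (\ref{eq:plat}).
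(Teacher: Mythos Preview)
The paper does not actually supply a proof of this theorem: after the statement there is no proof environment, only a few informal paragraphs asserting that ``it is apparent'' the scheme converges to the potential-game optimum as $\epsilon\to0$, invoking fictitious play in strictly concave potential games via the reference \cite{fictitious}, and describing the price update qualitatively. So there is no detailed argument in the paper against which to check yours; your proposal is substantially more complete than what the authors provide.

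That said, your reading is consistent with, and a genuine elaboration of, the paper's intended argument. The two-timescale decomposition (prices move finitely often, then proximal fictitious play on a fixed concave potential game) is exactly the mechanism the authors gesture at, and your Lagrangian/primal--dual interpretation of $\mu_{j,t}$ as the multiplier on the dropped constraint (\ref{constr3ab}) is a natural way to make their ``price adjusts to supply--demand gap'' heuristic precise; the paper never states this interpretation explicitly. Two places where you go beyond the paper and should be careful if you want a rigorous proof: (i) the termination test only yields $s_{i,t}\ge\sum_{j\neq i}q_{j,i,t}$, so restoring the \emph{equality} in (\ref{constr3ab}) genuinely requires the $O(\epsilon)$ residual bound you allude to, together with continuity of the best response in the price --- the paper simply absorbs this into ``small enough $\epsilon$''; and (ii) your displacement argument for the platform objective via (\ref{constr7}) is not a one-for-one substitution because of the different efficiencies $r_{i,j}$ versus $r_i$; you need Assumption~\ref{assum:trnseff} ($r_{i,j}>r_i$) to conclude that more exchange strictly lowers $\sum_i l_{i,t}$. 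With those two caveats addressed, your outline would constitute the proof the paper omits.
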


It is apparent that as $\epsilon$ becomes smaller, the algorithm converges to the solution of the potential game. Specifically, the algorithm converges to the generalized Nash equilibrium.

Note that in the algorithm, each prosumer only needs to know the load  of the other users during the  time slots  $(k-1)T+1,\ldots, kT$ for deciding its own decision. The prosumer can obtain the information from the blockchain where each prosumer updates its strategy.  It does not need to know the utility functions of the other users, their energy surplus, or even the battery capacities. Thus, the algorithm is easy to implement, and yet, it converges to the generalized Nash equilibrium. 

The user uses the data over the last epoch to obtain its own decision in the current epoch. Surprisingly, a simple algorithm converges to an equilibrium strategy. The above type of strategy where a prosumer learns about the strategy of the other prosumers, is known as {\em fictitious play}.  Since the potential function is strictly concave, the convergence is exponential \cite{fictitious}. 

The platform updates the price for each prosumer depending on the supply and demand for each prosumer. If the supply exceeds demand for a prosumer, it will decrease the corresponding price. On the other hand, if the demand is higher than the supply, it will increase the price. 

  Note that each prosumer knows the value of  $\gamma_t$ while updating its decision. $\gamma_t$ is set by the ISO, and changes in a longer time scale. Since the estimate of the renewable energy generation changes in real time, the prosumer can update its strategy at a minute scale. The convergence is exponential, thus, the algorithm can be adopted dynamically to obtain the prices. The initial price can be set at the lowest possible price for the conventional energy, which will ensure that the prosumers always try to buy from the exchange market. 
  
  {\em The exchange prices are dynamic, and is different for each prosumer}. $\mu_{i,t}$ attained by Algorithm \textbf{ALGO-DIST} is the maximum possible selling price in the exchange market which will minimize the conventional energy consumption. It also depends on the excess energy a prosumer has.  During the peak time, the conventional energy price is likely to be more. The prosumers will be  more likely to sell energies during that time. The above has threefold advantages. First, the prosumers will earn more. Second, the peak conventional energy demand will also reduce. Finally, the price for the conventional energy will also go down.

\section{Numerical Analysis}\label{sec:simulation}
In this section, we show the equilibrium load profile, the reduction of the conventional energy consumption, and the efficiency of the generalized Nash equilibrium in an example setting. 
\subsection{Parameter Setup}
The conventional energy generation incurs a cost to the grid. We assume that  the cost function $C(\cdot)$ is a piece-wise quadratic function
\begin{align}
C(L)=\begin{cases} a_1L^2+c_1,\quad \text{if } L\leq L_1\nonumber\\
a_2L^2+c_2, \quad \text{else }\end{cases}
\end{align}
For the simulation $L_1$ is assumed to be $2$MW. $a_1$, $c_1$ are taken as $1$. $a_2$, $c_2$ are taken as $2$. 
The utility function of the prosumers is defined in the following manner 
\begin{align}
U_i(d)=\beta_t d- \zeta d^2.
\end{align}
where $\beta_t$ is a time-dependent random parameter which has a Gaussian distribution of mean $0.3$\$/kW during the off-peak period (for $t\in [6,9], t\in [13,15]$), and has a Gaussian distribution of mean $0.6$\$/kW during the peak period (for $t\in [9,12]$ and $t[16,18]$). $\beta_t$ has a variance of $30$ cents for all $t$. $\zeta$ is assumed to be $0.1$ \$/(kW-h)$^2$. We assume that $\gamma_t$ is the same across the day. More specifically, we set $\gamma_t$ at $0.15$ for all $t$ apart from Fig.~\ref{fig:gammvssoc}. 

We assume that the renewable energy is one-sided normally distributed with mean $2$ kW and variance of $1$ (kW)$^2$. We assume that the battery capacity is assumed to be $10$kWh.  We use the Algorithm \textbf{ALGO-DIST} with $\epsilon=10^{-6}$ to obtain the solution for our approach.

\begin{figure}
\centering
\includegraphics[width=.33\textwidth]{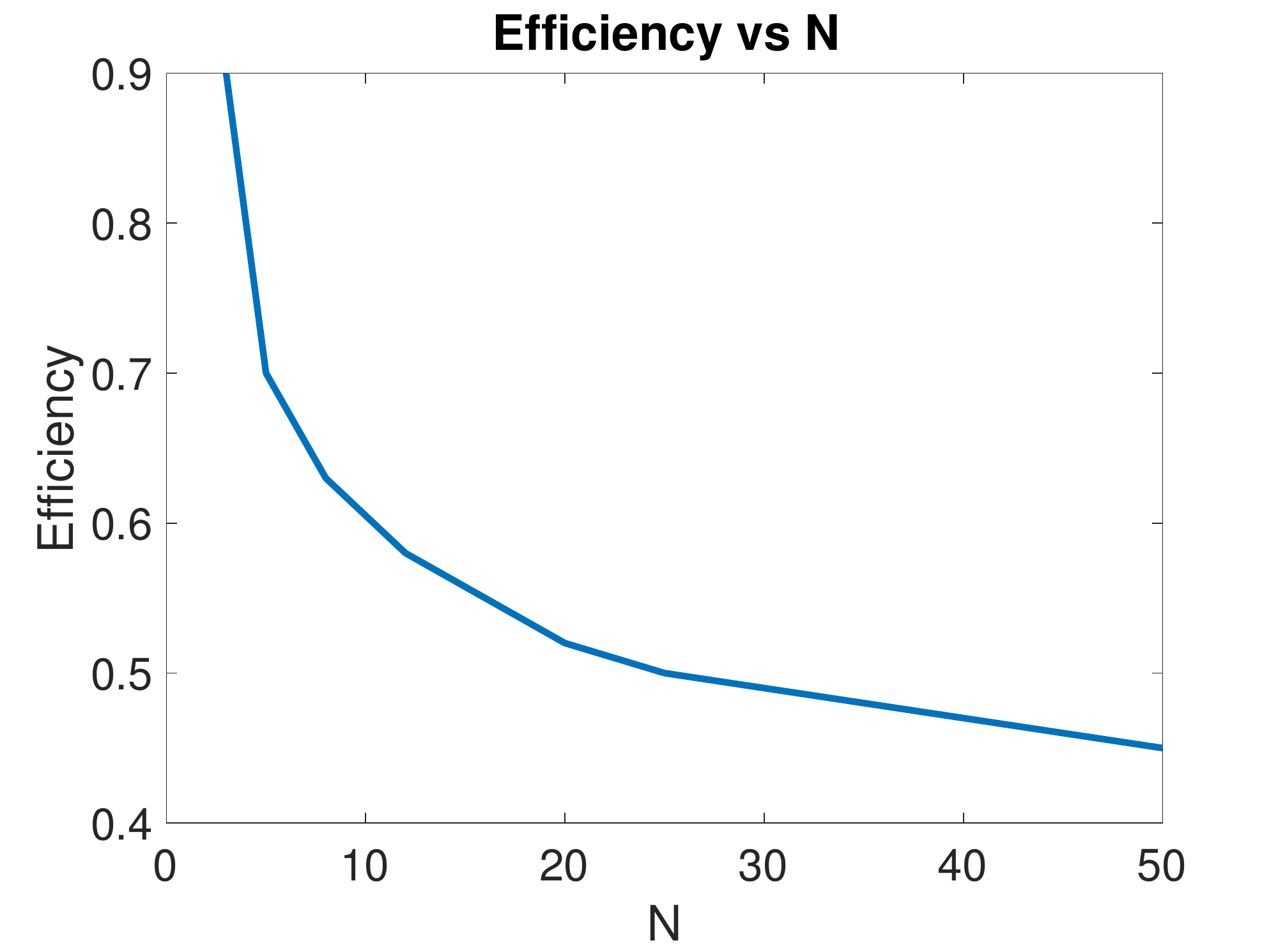}
\vspace{-.2in}
\caption{The variation of efficiency of user's equilibrium strategy profile.}
\label{fig:eff}
\vspace{-.2in}
\end{figure}

The transmission loss is assumed to be symmetric and equal for each prosumer. Specifically, $r_{i,j}=0.9$ for all $i$ and $j$. The transmission loss between the grid and prosumers is also assumed to be the same with $r_i=0.8$ for all $i$. 
\subsection{Results}
\subsubsection{Efficiency with number of prosumers}
Our first result (Fig.~\ref{fig:eff}) shows the efficiency of the equilibrium strategy profile with the number of prosumers. As the number of prosumers increases, the efficiency decreases, however, the decrement slows down when the number of prosumers exceeds a certain threshold. The variation is non-linear and there is a lower bound as shown in Theorem 2. 

\subsubsection{Impact of exchange market on the peak load}
Fig.~\ref{fig:eta} shows that the exchange market reduces the peak demand by at least 30\%. When there is no exchange market, we assume that the prosumers solve the problem with $\mu_{i,t}$ is set at a high value for all $i$ and $t$ in the optimization problem $P$ (cf.~(\ref{eq:p})). Thus, the prosumers can only sell excess energies to the grid, or use it at later time by storing in their batteries.  However, even the prosumers have the storage device, the exchange market can greatly enhance the reduction of the peak load. This is because the prosumers can sell energies to the other prosumers in the peak time which reduces the peak load. The exchange prices are also high during the peak time as the demand to the prosumers increases during the peak time. 
\begin{figure}
	\centering
\includegraphics[width=.33\textwidth]{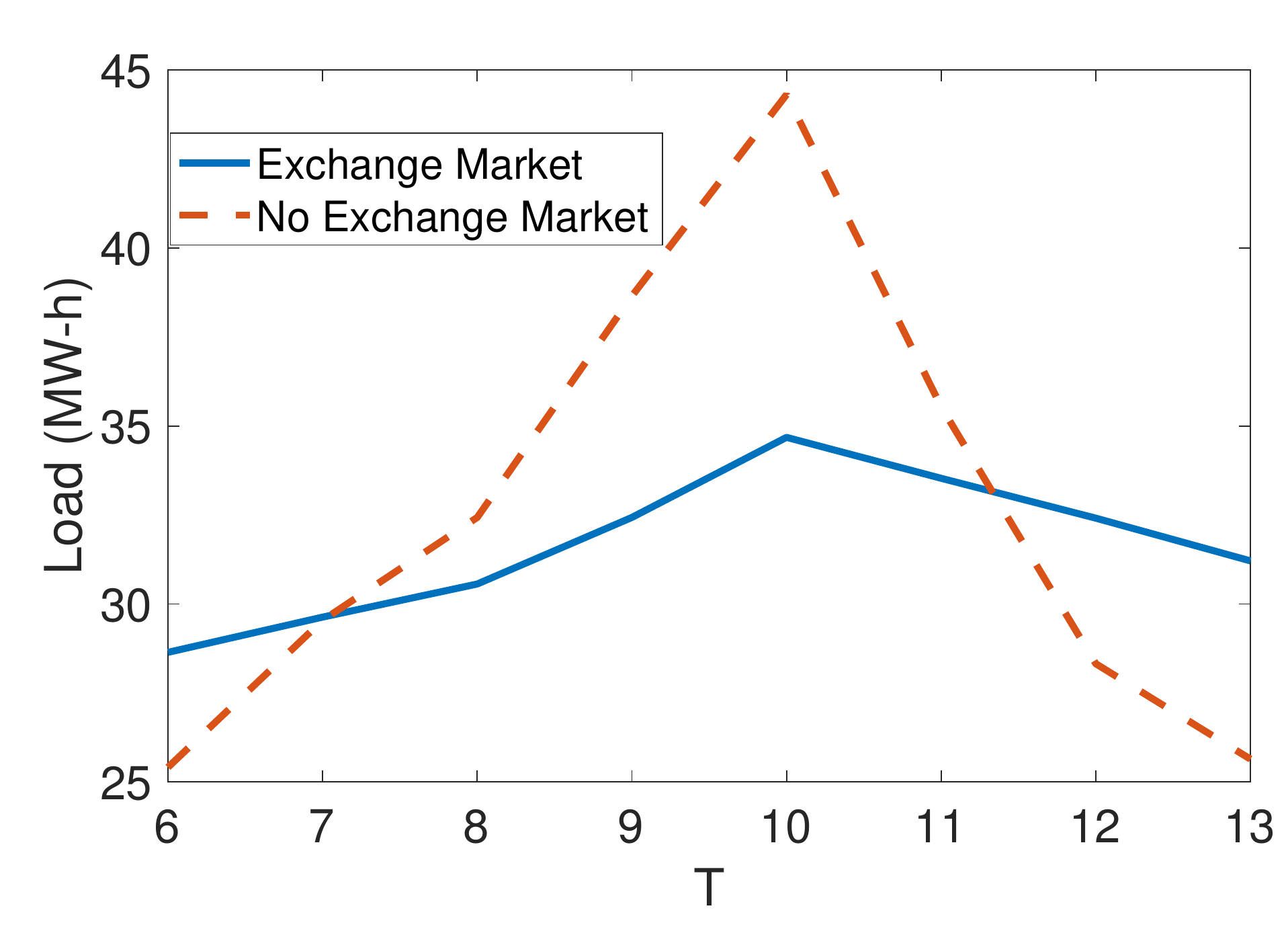}
\vspace{-.2in}
\caption{Variation of the demand of conventional energy with $T$ with exchange market, and no exchange market for $N=50$. The time $[9,12]$ is peak period, rest is off-peak period.}
\label{fig:eta}
\vspace{-.2in}
\end{figure}

\subsubsection{Increase of social welfare}
The social welfare is defined to be the difference between the utility function of the prosumers and the cost function of the conventional energy. A higher social welfare is always beneficial. In fact, independent system operator (ISO) wants to maximize the social welfare. 

Fig.~\ref{fig:exchange} shows the importance of the exchange market as compared to the one with no exchange market. Fig.~\ref{fig:exchange} shows that the social welfare is at least 25\% higher when there is no exchange market. We assume that when there is no exchange market the prosumers optimize with $\mu_{i,t}$ is set at a high value in optimization problem $P$ (cf.~(\ref{eq:p})) which precludes any exchange among the prosumers. The reduction of the social welfare is because of two factors. First, the transmission efficiency is  poorer between grid and the prosumers. Second, the price paid by the prosumers is much higher when there is no exchange market. As the storage capacity increases, the difference is more significant. However, when the storage capacity increases beyond a certain threshold, the difference decreases. This is because for higher storage, the prosumers become self sufficient. 

\begin{figure}
	\centering
\includegraphics[width=.33\textwidth]{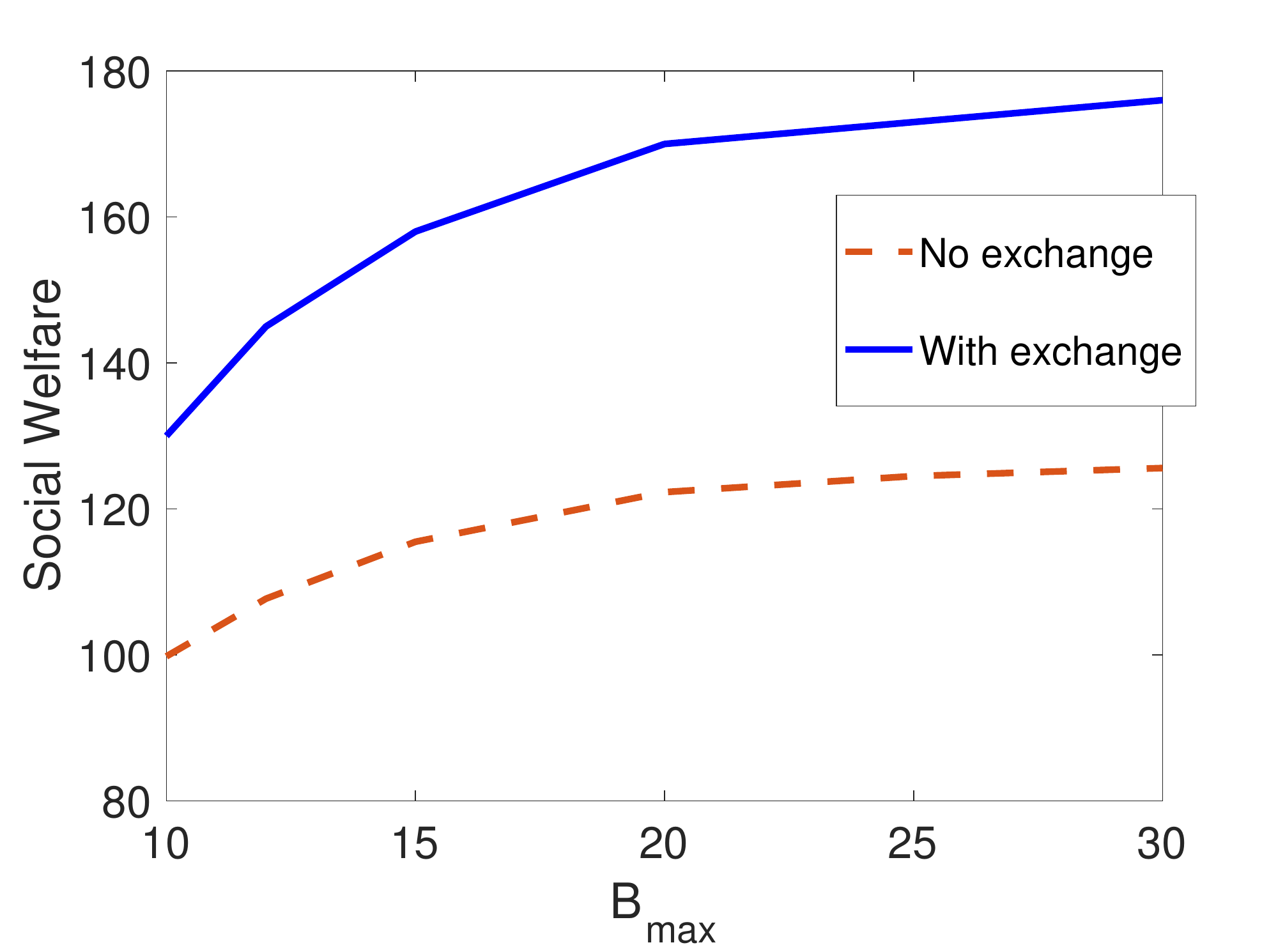}
\vspace{-.2in}
\caption{Variation of social welfare when there is an exchange market and when there is no exchange market with the battery capacity. }
\label{fig:exchange}
\vspace{-.2in}
\end{figure}

\begin{figure}
\centering
\includegraphics[width=.3\textwidth]{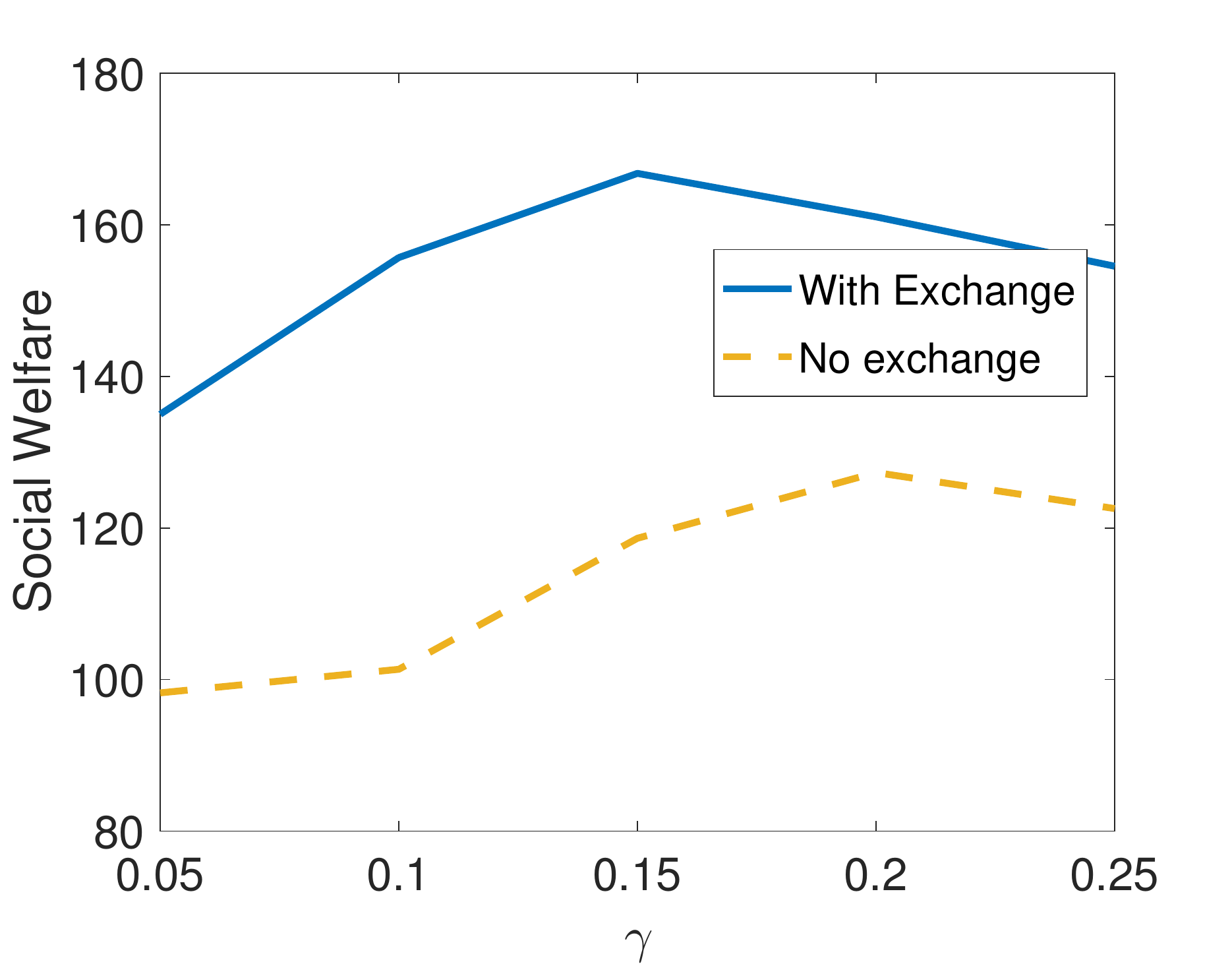}
\vspace{-.2in}
\caption{Variation of the social welfare when there is an exchange market and when there is no exchange market with $\gamma_t=\gamma$ for all $t$, and $N=50$.}
\label{fig:gammvssoc}
\vspace{-.2in}
\end{figure}

\subsubsection{Effect of $\gamma_t$}
Note that the ISO should select $\gamma_t$ in order to maximize the overall social welfare. Fig.~\ref{fig:gammvssoc} shows that when $\gamma_t$ is small, the prosumers' utility is higher as the prosumers have to pay a lower payment. The cost of conventional energy is higher as the demand of conventional energy will be higher in the above scenario. However, as $\gamma_t$ exceeds a threshold the social welfare decreases because the prosumers have to pay a higher amount which decreases the utility. The social welfare is higher when there is exchange market compared to the scenario when there is no exchange market for every value of $\gamma_t$. 
  \section{Conclusions and Future Work}
  We consider a system of prosumers which can share their excess energies among themselves in addition to buying conventional energy from the market. We consider a linear dynamic pricing for the conventional energy where the price for the conventional energy depends on the total demand for the conventional energy. The optimal strategy of a prosumer depends on the strategies of other prosumers. We, thus, model the strategy selection problem of a prosumer as a game theoretic problem. The strategy space of a prosumer inherently depends on the strategy of other prosumers since the amount of energy a prosumer can buy inherently depends on the amount of energy other prosumers want to sell. We, thus, seek to obtain a generalized Nash equilibrium. We show that the game admits a concave potential function. We propose a distributed algorithm where each prosumer only selects its own optima strategy. The platform selects the exchange price for each prosumer. A prosumer does not need to know the other prosumer's utilities. It only needs to know the strategy taken by a prosumer in the previous iteration to find its own in the current iteration. The platform updates the price depending on the supply and demand. We show that the distributed algorithm converges to the unique generalized Nash equilibrium, and the optimal exchange price. In the optimal exchange price, total conventional energy consumption is reduced, the peak load is reduced. 
  
 We do not consider how the ISO should set the parameter $\gamma_t$. The characterization of the optimal $\gamma_t$ is left for the future. We consider a real time price which varies linearly with the total demand. The characterization of the generalized Nash equilibrium for other non-linear function, and its impact on the efficiency ratio also constitutes an interesting research direction. The impact of the uncertainty of the renewable energy on the equilibrium strategy profile is also left for the future.

%
%
\end{document}